\newcommand{\cO}{\mathcal{O}}
\newcommand{\cOtilde}{\tilde{\cO}}
\def\dd{\mathinner{.\,.}}
\newcommand{\per}{\textsf{per}}
\title{Substring Complexity in Sublinear Space} 
\author{Giulia Bernardini}{University of Trieste, Italy}{giulia.bernardini@units.it}{https://orcid.org/0000-0001-6647-088X}{MUR - FSE REACT EU - PON R\&I 2014-2020.}
\author{Gabriele Fici}{Dipartimento di Matematica e Informatica, University of Palermo, Italy}{gabriele.fici@unipa.it}{https://orcid.org/0000-0002-3536-327X}{Projects MUR PRIN~2017 ADASCOML -- 2017K7XPAN and MUR PRIN~2022 APML -- 20229BCXNW.}
\author{Paweł Gawrychowski}{Institute of Computer Science, University of Wroc\l{}aw, Poland}{gawry@cs.uni.wroc.pl}{https://orcid.org/0000-0002-6993-5440}{}
\author{Solon P. Pissis}{CWI, Amsterdam, The Netherlands \and Vrije Universiteit, Amsterdam, The Netherlands }{solon.pissis@cwi.nl}{https://orcid.org/0000-0002-1445-1932}{Supported by the PANGAIA (No 872539) and ALPACA (No 956229) projects.}
\authorrunning{G. Bernardini, G. Fici, P. Gawrychowski, S. P. Pissis} 
\keywords{sublinear-space algorithm, string algorithm, substring complexity} 
\begin{document}

\maketitle

\begin{abstract}
Shannon's entropy is a definitive lower bound for statistical compression. Unfortunately, no such clear measure exists for the compressibility of repetitive strings. Thus, ad hoc measures are employed to estimate the repetitiveness of strings, e.g., the size $z$ of the Lempel–Ziv parse or the number $r$ of equal-letter runs of the Burrows-Wheeler transform. A more recent one is the size $\gamma$ of a smallest string attractor. Let $T$ be a string of length $n$.
A string attractor of $T$ is a set of positions of $T$ capturing the occurrences of all the substrings of $T$. Unfortunately, Kempa and Prezza~[STOC 2018] showed that computing $\gamma$ is NP-hard. Kociumaka et al.~[LATIN 2020] considered a new measure of compressibility that is based on the function $S_T(k)$ counting the number of distinct substrings of length $k$ of $T$, also known as the {\em substring complexity} of $T$. This new measure is defined as $\delta= \sup\{S_T(k)/k, k\geq 1\}$ and lower bounds all the relevant ad hoc measures previously considered. In particular, $\delta\leq \gamma$ always holds and $\delta$ can be computed in $\cO(n)$ time using $\Theta(n)$ working space. Kociumaka et al.~showed that one can construct an $\cO(\delta \log \frac{n}{\delta})$-sized representation of $T$ supporting efficient direct access and efficient pattern matching queries on $T$. Given that for highly compressible strings, $\delta$ is significantly smaller than $n$, it is natural to pose the following question: 
\begin{center}
\emph{Can we compute $\delta$ efficiently using sublinear working space?} 
\end{center}

It is straightforward to show that in the comparison model, any algorithm computing $\delta$ using $\cO(b)$ space requires $\Omega(n^{2-o(1)}/b)$ time through a reduction from the element distinctness problem [Yao, SIAM J. Comput. 1994]. We thus wanted to investigate whether we can indeed match this lower bound. We address this algorithmic challenge by showing the following bounds to compute $\delta$: 

\begin{itemize}
    \item $\cO(\frac{n^3\log b}{b^2})$ time using $\cO(b)$ space, for any $b\in[1,n]$, in the comparison model.
    \item $\cOtilde(n^2/b)$\footnote{The $\tilde{\cO}(f)$ notation denotes $\cO(f\cdot \text{polylog}(n))$.} time using $\cOtilde(b)$ space, for any $b\in[\sqrt{n},n]$, in the word RAM model. This gives an $\cOtilde(n^{1+\epsilon})$-time and $\cOtilde(n^{1-\epsilon})$-space algorithm to compute $\delta$, for any $0<\epsilon \leq 1/2$.
\end{itemize}

Let us remark that our algorithms compute $S_T(k)$, for all $k$, within the same complexities.
\end{abstract}

\section{Introduction}\label{sec:intro}

We are currently witnessing our world drowning in data. These datasets are generated by a large gamut of applications: databases, web applications, genome sequencing projects, scientific computations, sensors, e-mail, entertainment, and others. The biggest challenge is thus to develop theoretical and practical methods for processing datasets efficiently.

Compressed data representations that can be {\em directly} used in compressed form have a central role in this challenge~\cite{DBLP:books/daglib/0038982}. Indeed, much of the currently fastest-growing data is highly repetitive; this, in turn, enables space reductions of orders of magnitude~\cite{DBLP:journals/jacm/GagieNP20}. Prominent examples of such data include genome, versioned text, and software repositories collections. A common characteristic is that each element in a collection is very similar to every other.

Since a significant amount of this data is sequential, a considerable amount of algorithmic research has been devoted to text indexes over the past decades~\cite{DBLP:conf/focs/Weiner73,DBLP:journals/siamcomp/ManberM93,DBLP:conf/focs/Farach97,DBLP:journals/jacm/KarkkainenSB06,DBLP:journals/jacm/FerraginaM05,DBLP:journals/siamcomp/GrossiV05,DBLP:journals/siamcomp/HonSS09,DBLP:conf/stoc/Belazzougui14,DBLP:journals/algorithmica/0001KL15,DBLP:conf/soda/MunroNN17,DBLP:conf/stoc/KempaK19,DBLP:journals/jacm/GagieNP20,DBLP:conf/soda/KempaK23}. String processing applications (see~\cite{DBLP:books/cu/Gusfield1997,DBLP:journals/cacm/ApostolicoCFGM16} for reviews) require fast access to the substrings of the input string. These applications rely on such text indexes, which arrange the string suffixes lexicographically in an ordered tree~\cite{DBLP:conf/focs/Weiner73} or an ordered array~\cite{DBLP:journals/siamcomp/ManberM93}.

This significant  amount of research has resulted in compressed text indexes
that support fast pattern searching in
space close to the statistical entropy of the text collection.
The problem, however, is that this kind of entropy is unable to capture repetitiveness~\cite{DBLP:journals/tcs/KreftN13,DBLP:journals/jcb/MakinenNSV10}. To achieve orders-of-magnitude space reductions, one thus needs to resort to other compression methods, such as Lempel-Ziv (LZ)~\cite{LZ77}, grammar compression~\cite{grammar} or run-length compressed Burrows-Wheeler transform (BWT)~\cite{DBLP:journals/jacm/GagieNP20}, to name a few; see~\cite{DBLP:journals/jacm/GagieNP20} for a review.

Unlike Shannon's entropy, which is a definitive lower bound for statistical compression, no such clear measure exists for the compressibility of repetitive texts.
Other than Kolmogorov’s complexity~\cite{Kolmogorov}, which is not computable, repetitiveness is measured in ad hoc terms, based on what the compressors may achieve. Such measures on a string $T$ include: the number $z$ of phrases produced by the LZ parsing of $T$; the size $g$ of the smallest grammar generating $T$; and the number $r$ of maximal equal-letter runs in the BWT of $T$. See~\cite{DBLP:journals/csur/Navarro21a} for a survey.

An improvement is the recent introduction of the {\em string attractor}~\cite{DBLP:conf/stoc/KempaP18} notion. Let $T$ be a string of length $n$.
An attractor $\Gamma$ is a set of positions over $[1,n]$ such that any substring of $T$ has an occurrence covering a position in $\Gamma$. The size $\gamma$ of a smallest attractor asymptotically lower bounds all the repetitiveness measures listed above (and others; see~\cite{KNPlatin20}). Unfortunately, using indexes based on $\gamma$ comes also with some challenges. Other than computing $\gamma$ is NP-hard~\cite{DBLP:conf/stoc/KempaP18}, it is unclear if $\gamma$ is the definitive measure of repetitiveness: we do not know whether one can always represent $T$ in $\cO(\gamma)$ space (machine words). This motivated Christiansen et al.~\cite{DBLP:journals/corr/abs-1811-12779} to consider a new measure $\delta$ of compressibility, initially introduced in the area of string compression by Raskhodnikova et al.~\cite{DBLP:journals/algorithmica/RaskhodnikovaRRS13}, and for which $\delta \leq \gamma$ \emph{always holds}~\cite{DBLP:journals/corr/abs-1811-12779}.

\begin{definition}[\cite{DBLP:journals/corr/abs-1811-12779}]
Let $T$ be a string and $S_T(k)$ its {\em substring complexity}: the function counting the number of distinct substrings of length $k$ of $T$. The {\em normalized substring complexity} of $T$ is the function $S_T(k)/k$ and we set $\delta= \sup\{S_T(k)/k, k\geq 1\}$ its supremum.
\end{definition}

Christiansen et al.~also showed that $\delta$ can be computed in $\cO(n)$ time using $\Theta(n)$ working space. Kociumaka et al.~\cite{KNPlatin20,DBLP:journals/tit/KociumakaNP23} showed that $\delta$ can also be strictly smaller than $\gamma$ by up to a logarithmic factor: for any $n$ and any $\delta$, there are strings with $\gamma=\Omega(\delta \log \frac{n}{\delta})$. Moreover, Kociumaka et al.~developed a representation of $T$ of size $\cO(\delta \log \frac{n}{\delta})$, which is worst-case optimal in terms of $\delta$ and allows for accessing any $T[i]$ in time $\cO(\log \frac{n}{\delta})$ and for finding all \textit{occ} occurrences of any pattern $P[1\dd m]$ in $T$ in near-optimal time $\cO(m\log n+ \textit{occ}\log^{\epsilon}n)$, for any constant $\epsilon>0$ (see also~\cite{DBLP:conf/latin/KociumakaNO22} and~\cite{DBLP:journals/corr/abs-2308-03635} for further improvements). Since for highly compressible strings, $\delta$ is significantly smaller than $n$, we pose the following basic question: 
\begin{center}
\textit{Can we compute $\delta$ efficiently using sublinear working space?}
\end{center}

The question on computing $\delta$ in \emph{bounded space} arises naturally: it extends a large body of work on problems on strings, which admit a straightforward solution if we have the space to construct and store the suffix tree~\cite{DBLP:conf/focs/Weiner73}; but as this is often not the case, one needs to overcome the space challenge by investigating space-time trade-offs for these problems.

\subparagraph*{Related Work.}~The standard approach for showing space-time trade-off lower bounds for problems answered in polynomial time has been to analyze their complexity on {\em (multi-way) branching programs}. In this model, the input is stored in read-only memory, the output in write-only memory, and neither is counted towards the space used by any algorithm. This model is powerful enough to simulate both Turing machines and standard RAM models that are unit-cost with respect to time and log-cost with respect to space. It was introduced by Borodin and Cook, who used it to prove that any multi-way branching program requires a time-space product of  $\Omega(n^2/\log n)$ to sort $n$ integers in the range $[1,n^2]$~\cite{DBLP:journals/siamcomp/BorodinC82,DBLP:journals/siamcomp/Beame91}. Unfortunately, the techniques in~\cite{DBLP:journals/siamcomp/BorodinC82} yield only trivial bounds for problems with single outputs.

String algorithms that use sublinear space have been extensively studied over the past decades~\cite{DBLP:journals/jcss/GalilS83,DBLP:journals/jacm/CrochemoreP91,DBLP:conf/focs/PoratP09,DBLP:journals/tcs/BreslauerGM13,DBLP:conf/cpm/StarikovskayaV13,DBLP:journals/talg/BreslauerG14,DBLP:conf/esa/KociumakaSV14,DBLP:conf/esa/CliffordFPSS15,DBLP:conf/esa/0001GGK15,DBLP:conf/soda/CliffordFPSS16,DBLP:conf/icalp/CliffordS16,DBLP:conf/esa/GolanP17,DBLP:journals/algorithmica/GolanKP19,DBLP:conf/icalp/GolanKP18,DBLP:conf/cpm/GolanKKP20,DBLP:conf/soda/CliffordKP19,DBLP:conf/cpm/GawrychowskiS19,DBLP:conf/dcc/AyadBFHP19,DBLP:conf/cpm/Nun0KK20,DBLP:conf/stoc/ChanGKKP20,DBLP:journals/iandc/RadoszewskiS20,algorithmica2020}.
The perhaps most relevant problem to our work is the classic longest common substring of two strings. Formally, given two strings $X$ and $Y$ of total length $n$, the \emph{longest common substring} (LCS) problem consists in computing a longest string occurring as a substring of both $X$ and $Y$. The LCS problem was conjectured by Knuth to require $\Omega(n \log n)$ time. This conjecture was disproved by Weiner who, in his seminal paper on suffix tree construction~\cite{DBLP:conf/focs/Weiner73}, showed how to solve the LCS problem in $\cO(n)$ time for constant-sized alphabets. Farach showed that the same problem can be solved in the optimal $\cO(n)$ time for polynomially-sized integer alphabets~\cite{DBLP:conf/focs/Farach97}. 
A straightforward space-time trade-off lower bound of $\Omega(b)$ space and $\Omega(n^2/b)$ time for the LCS problem can be derived from the problem of checking whether the length of an LCS is $0$; i.e., deciding if $X$ and $Y$ have a common letter or not. Thus, in some sense, the LCS problem can be seen as a generalization of the {\em element distinctness} problem: given $n$ elements over a domain $D$,  decide whether all $n$ elements are distinct.

On the upper bound side, Starikovskaya and Vildh{\o}j showed that for any $b\in [n^{2/3},n]$, the LCS problem can be solved in $\cOtilde(n^2/b)$ time and $\cO(b)$ space~\cite{DBLP:conf/cpm/StarikovskayaV13}.
In~\cite{DBLP:conf/esa/KociumakaSV14}, Kociumaka et al.~gave an $\cO(n^2/b)$-time algorithm to find an LCS for any $b\in[1,n]$, and also provided a lower bound, which states that any deterministic multi-way branching program that uses $b\leq \frac{n}{\log n}$ space must take $\Omega(n\sqrt{\log(n/(b\log n))/\log \log(n/(b\log n))})$ time. This lower bound implies that the classic $\cO(n)$-time solution for the LCS problem~\cite{DBLP:conf/focs/Weiner73,DBLP:conf/focs/Farach97} is optimal in the sense that we cannot hope for an $\cO(n)$-time algorithm using $o(n/\log n)$ space. Unfortunately, we do not know if the $\cO(b)$-space and $\cO(n^2/b)$-time trade-off is generally the best possible for the LCS problem.
For the easier element distinctness problem, Beame et al.~\cite{DBLP:conf/focs/BeameCM13} showed a randomized multiway branching program using
$\cOtilde(n^{3/2}/\sqrt{b})$-time and $\cO(b)$ space.

It is thus a big open question to answer whether the LCS problem can be solved asymptotically faster than $\cO(n^2/b)$ using $\cO(b)$ space.
Towards this direction, Ben-Nun et al.~exploited the intuition suggesting that an LCS of $X$ and $Y$ can be computed more efficiently when its length $L$ is large~\cite{DBLP:conf/cpm/Nun0KK20} (see also~\cite{DBLP:conf/cpm/Charalampopoulos18}). The authors showed an algorithm which runs in $\cOtilde(\frac{n^2}{L\cdot b} + n)$ time, for any $b\in[1,n]$, using $\cO(b)$ space.
Still, a straightforward lower bound for the aforementioned problem is in $\Omega(\frac{n^2}{L^2\cdot b} + n)$ time when $\cO(b)$ space is used; it seems that further insight is required to match this space-time trade-off lower bound.

\subparagraph*{Our Results and Techniques.}~Our goal is to efficiently compute $\delta$ using $\cO(b)$ space. As a preliminary step towards this algorithmic challenge, we show the following theorem.

\begin{restatable}{theorem}{theoremncube}\label{the:n3}
    Given a string $T$ of length $n$, we can compute $\delta= \sup\{S_T(k)/k, k\geq 1\}$
    in $\cO(\frac{n^3\log b}{b^2})$ time using $\cO(b)$ space, for any $b\in[1,n]$, in the comparison model.
\end{restatable}

It is straightforward to show that any comparison-based branching program to compute $\delta$ using $\cO(b)$ space requires $\Omega(n^{2-o(1)}/b)$ time through a reduction from the element distinctness problem~\cite{DBLP:journals/siamcomp/Yao94}. By Yao's lemma, this lower bound also applies to randomized branching programs~\cite{DBLP:journals/siamcomp/Yao94}.
This suggests that a natural intermediate step towards fully understanding the computation complexity of computing $\delta$ in small space should be designing an $\cOtilde(n^{2}/b)$-time algorithm using $\cO(b)$ space (not necessarily in the comparison model). 

The natural approach for computing $\delta$ is
through computing all values of $S_T(k)$. 
In particular, this is the idea behind the straightforward $\cO(n)$-time computation of $\delta$ using $\cO(n)$ space~\cite{DBLP:journals/corr/abs-1811-12779}.
It is unclear to us if a more direct approach exists (see also Section~\ref{sec:combinatorial} for a combinatorial analysis on the behaviour of $\delta$). 
Under this plausible assumption, we stress that 
computing $S_T(k)$, for all $k$ one-by-one, is a more general problem than computing the length $L$ of an LCS of $X$ and $Y$,
as an algorithm computing $S_T(k)$ can be used to compute $L$ within the same complexities. 
This follows by the following argument:
we compute $S_X(k)$, $S_Y(k)$, and $S_{X\#Y}(k)$ (where $\#$ is a special letter that does not occur in $X$ or in $Y$)
in parallel, and set $L$ equal to the largest $k$ such that $S_X(k)+S_Y(k)>S_{X\#Y}(k)-k$.
As the best-known time upper bound for the very basic question of computing LCS in $\cO(b)$ space
remains to be $\cO(n^{2}/b)$, this further motivates the algorithmic challenge of designing an algorithm with such bounds for computing $\delta$. We address it by proving the following theorem.

\begin{restatable}{theorem}{theoremnsquare}\label{the:n2}
    Given a string $T$ of length $n$, we can compute $\delta= \sup\{S_T(k)/k, k\geq 1\}$
    in $\cOtilde(n^2/b)$ time using $\cOtilde(b)$ space, for any $b\in[\sqrt{n},n]$, in the word RAM model. 
\end{restatable}

Our algorithms compute $S_T(k)$, for all $k$, within the same complexities.
To arrive at the $\cO(\frac{n^3\log b}{b^2})$-time bound, we
split the computation of the values $S_T(k)$ in $n/b$ phases: in each phase, 
we restrict to substrings whose length is in a range of size $b$. 
In turn, in each phase, we process the substrings that start within a range of $b$ positions of $T$ at a time, from left to right. 
With this scheme, we process in $\cO(n\log b)$ time each block of $n/b$ positions of $T$ in each of the $n/b$ phases, resulting in $\cO(\frac{n^3\log b}{b^2})$ time using $\cO(b)$ space.
For large enough $b$, we can process all the substrings of a single phase at once, 
saving a factor of $n/b$. We show in fact  that a representation of all the occurrences of all the substrings of a phase can be packed in $\cOtilde(b)$ space if $b$ is large enough, and process them in different ways depending on their period, following a scheme similar to~\cite{DBLP:conf/icalp/0001GPPR19}; we  also adapt a method used in~\cite{bernardini2021elasticdegenerate} to select a small set of anchors (length-$b$ substrings), so that each fragment of $T$ contains at least one anchor but their total number of occurrences in $T$ is  bounded.
Note that Theorem~\ref{the:n2} implies an $\cOtilde(n^{1+\epsilon})$-time and $\cOtilde(n^{1-\epsilon})$-space algorithm to compute $\delta$, for any $0<\epsilon \leq 1/2$.

\subparagraph*{Paper Organization.}~Section~\ref{sec:prel} introduces the basic definitions and notation we use and the space-time trade-off lower bound for computing $\delta$. In Section~\ref{sec:simple}, we present a simple $\cO(n^3/b)$-time and $\cO(b)$-space algorithm, for any $b\in[1,n]$. This algorithm is refined to run in $\cO(\frac{n^3\log b}{b^2})$ time using $\cO(b)$ space, for any $b\in[1,n]$, in Section~\ref{sec:improve}. Our main result, the $\cOtilde(n^2/b)$-time and $\cOtilde(b)$-space algorithm, for any $b\in[\sqrt{n},n]$, is presented in Section~\ref{sec:main}. In Section~\ref{sec:combinatorial}, we consider the notion of substring complexity from the combinatorial point of view; and in Section~\ref{sec:finale}, we conclude this paper with a final remark on approximating $\delta$.

\section{Preliminaries}\label{sec:prel}
An {\em alphabet} $\Sigma$ is a finite nonempty set of elements called {\em letters}. We fix throughout a {\em string} $T=T[1]\cdots T[n]$ of {\em length} $|T|=n$ over an ordered alphabet $\Sigma$. 
By $\varepsilon$ we denote the \emph{empty string} of length $0$. 
For two indices $1 \leq i \leq j \leq n$, the \emph{$(i,j)$-fragment} of $T$ is an \emph{occurrence} of the underlying \emph{substring} $T[i\dd j]=T[i]\cdots T[j]$.
A {\em prefix} of $T$ is a fragment of $T$ of the form $T[1\dd j]$ and a {\em suffix} of $T$ is a fragment of $T$ of the form $T[i\dd n]$. A prefix (resp.~suffix) of $T$ is \emph{proper} if it is not equal to $T$. We let $T^r=T[n]T[n-1]\cdots T[1]$ denote the \emph{reversal} of $T$.

A positive integer $p$ is \emph{a period} of a string $T$ if 
$T[i]=T[j]$ whenever $i=j\pmod p$;
we call \emph{the period} of $T$, denoted by $\per(T)$, the smallest such $p$. 
A string $T$ is said to be \emph{strongly periodic} if $\per(T) \leq |T|/4$ and \emph{periodic} if $\per(T)\le |T|/2$.
We call the lexicographically smallest cyclic shift of $T[1\dd \per(T)]$ the \emph{(Lyndon) root} of $T$. Notice that if $T$ is periodic, then the root of $T$ is always a fragment of $T$ (that is, it has an occurrence in $T$).

For every string $t$ and every natural number $\ell$, we
define the $\ell$th \emph{power} of $t$, denoted by $t^{\ell}$, 
by $t^0=\varepsilon$ and $t^k = t^{k-1}t$,
for integer $k=[1,\ell]$.
A \emph{run} with (Lyndon) root $t$ in a string $T$ is a periodic fragment $T[i\dd j]=t[q\dd |t|]t^{\beta}t[1\dd \gamma]$, with $q,\gamma \in [1,|t|]$ and $\beta$ a positive integer, such that both $T[i-1\dd j]$ and $T[i\dd j+1]$, if defined, have their smallest period larger than $|t|$; we say that $q\in[1,|t|]$ is the \emph{offset} of the run $t[q\dd |t|]t^{\beta}t[1\dd \gamma]$ and that two runs with the same root are \emph{synchronized} if they have the same offset. We represent a run $t[q\dd |t|]t^{\beta}t[1\dd \gamma]$ by its starting and ending positions $(i,j)$ in $T$, its root $t$, and its offset $q$. 

The \textit{element distinctness} problem asks to determine if all the elements of an array $A$ of size $n$ are pairwise distinct. Yao showed that, in the comparison-based branching program model, the time required to solve the element distinctness problem using $\cO(b)$ space is in $\Omega(n^{2-o(1)}/b)$~\cite{DBLP:journals/siamcomp/Yao94}. We show the following lower bound for computing $\delta$ in the same model.

\begin{theorem}
The time required to compute $\delta$ for a string $T$ of length $n$ using $\cO(b)$ space in the comparison model is in $\Omega(n^{2-o(1)}/b)$.
\end{theorem}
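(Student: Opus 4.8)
The plan is to reduce element distinctness to the computation of $\delta$ via the simplest possible, locality-preserving encoding: read the input array as a string. Given an instance $A[1\dd n]$ of element distinctness whose entries come from a domain $\{1,\ldots,n^{\cO(1)}\}$ --- the same alphabet that is allowed for $T$ --- I would take $T=A[1]A[2]\cdots A[n]$, a string of length $n$. Everything then rests on the claim that \emph{$A$ consists of pairwise distinct elements if and only if $\delta(T)=n$}.

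To prove the claim I would argue as follows. For every $k\in[1,n]$ there are exactly $n-k+1$ length-$k$ fragments of $T$, so $S_T(k)\le n-k+1$, while $S_T(k)=0$ for $k>n$; hence $\delta=\max_{1\le k\le n}S_T(k)/k$ and, term by term, $S_T(k)/k\le (n-k+1)/k$. The right-hand side is decreasing in $k$, equals $n$ at $k=1$, and is at most $(n-1)/2<n$ for every $k\ge 2$; consequently $\delta\le n$ unconditionally, and only the term $k=1$ can ever attain the value $n$. Since $S_T(1)$ is precisely the number of distinct letters of $T$, i.e.\ the number of distinct values in $A$, we conclude: if $A$ is distinct then $S_T(1)=n$, so $n\le\delta\le n$ and $\delta=n$; conversely, if $\delta=n$ then, as every length $k\ge2$ contributes strictly less than $n$, the maximum is forced to be attained at $k=1$, giving $S_T(1)=n$ and hence $A$ distinct.

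Finally I would transfer the lower bound in the comparison-based branching program model. Any comparison-based branching program computing $\delta$ on length-$n$ strings over $\{1,\ldots,n^{\cO(1)}\}$ in time $T(n)$ and space $\cO(b)$ can be turned, at no cost in time or space, into one deciding element distinctness over the corresponding domain: run it on $T=A$ (so each comparison between letters of $T$ is literally a comparison between entries of $A$), and at each output gate answer ``distinct'' exactly when the computed value equals $n$. Yao's lower bound~\cite{DBLP:journals/siamcomp/Yao94} for element distinctness then yields $T(n)\cdot b=\Omega(n^{2-o(1)})$ for a sufficiently large polynomial-size domain, i.e.\ $T(n)=\Omega(n^{2-o(1)}/b)$; by Yao's minimax lemma the same bound holds against randomized branching programs. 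I expect no real obstacle: the only points needing care are the combinatorial claim (checking that normalizing by $k$ really pins the extremal length to $k=1$ in the distinct case) and making sure the domain Yao's theorem demands fits inside the $n^{\cO(1)}$ alphabet budget --- both immediate. The statement is ``straightforward'' precisely because the identity encoding already exposes $S_T(1)$, and the $\sup\{S_T(k)/k\}$ normalization is exactly what makes length $1$ the maximizer.
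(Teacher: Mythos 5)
Your reduction is correct and is essentially the same as the paper's: both reduce element distinctness to computing $\delta$ by observing that $S_T(1)$ counts distinct letters and that the $k=1$ term dominates the normalized supremum. The only difference is that the paper pads $T=A\cdot\#_1\#_2\cdots\#_n$ with $n$ fresh distinct symbols so that $\delta=S_T(1)=n+|\{A\}|$ holds unconditionally (then tests $\delta=2n$), whereas you use the identity encoding $T=A$ and note that $S_T(k)/k\le(n-k+1)/k\le(n-1)/2<n$ for $k\ge 2$ already pins the maximizer to $k=1$ (then test $\delta=n$); your version is a mild simplification of the same idea and is equally valid as a branching-program reduction.
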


\begin{proof}
We reduce the element distinctness problem to computing $\delta$ in $\cO(n)$ time as follows. Let $A$ be the input array for the element distinctness problem. 
Further let $\#_1,\#_2,\ldots,\#_n$ be pairwise distinct elements not occurring in $A$.
We set $T=A\cdot\#_1\#_2\ldots\#_n$, with $|T|=2n$, $\#_i\neq A[j]$, for all $i,j\in [1,n]$. Observe that $S_T(k)/k< n$, for all $k\geq 2$, and thus $\delta=S_T(1)=n+|\{A\}|$. Then $A$ has a repeating element if and only if $\delta<2n$. 
\end{proof}

\section{$\cO(n^3/b)$ Time Using $\cO(b)$ Space in the Comparison Model}\label{sec:simple}

We start with a warm-up lemma to guide the reader smoothly to the $\cO(n^3/b)$-time algorithm.

\begin{lemma}\label{lem:cubic}
    Given a string $T$ of length $n$, we can compute $\delta= \sup\{S_T(k)/k, k\geq 1\}$
    in $\cO(n^3)$ time using $\cO(1)$ space in the comparison model.
\end{lemma}

\begin{proof}

Let us consider each $S_T(k)$ separately, for all $k\in[1,n]$.

Set $S_T(k)=0$. For all $i\in[1,n]$, we increase $S_T(k)$ if
$T[i\dd i+k-1]$ is the first occurrence in $T[1\dd i+k-1]$.
To perform this we check whether $T[j\dd j+k-1]=T[i\dd i+k-1]$, for all $j\in [1,i-1]$. We employ any linear-time constant-space pattern matching algorithm~\cite{DBLP:journals/jcss/GalilS83,DBLP:journals/jacm/CrochemoreP91,DBLP:journals/tcs/BreslauerGM13} to do this check in $\cO(n)$ time using $\cO(1)$ space for a single $i$. The statement follows.
\end{proof}

We next generalize Lemma~\ref{lem:cubic} by employing
the following straightforward observation.

\begin{observation}\label{obs:occur}
Let $S$ be a substring of $T$. If $S$ occurs at least twice in $T$, then every substring of $S$ occurs at least twice in $T$; if $S$ occurs only once in $T$, then any substring of $T$ containing $S$ as a substring occurs only once in $T$.
\end{observation}

\subparagraph*{Main Idea.}~Recall that we have $\cO(b)$ budget for space. At any phase of the algorithm, we maintain $S_T(k)$ for $b$ values of $k$, and iterate on consecutive non-overlapping substrings of $T$ of length $b$, which we call {\em blocks}. This gives $n/b$ phases and $n/b$ iterations per phase, respectively. For each iteration, we define a substring $M$ of $T$, which we call {\em anchor}. We search for occurrences of this anchor in $T$ and extend each of the (at most) $n$ occurrences of $M$ in $\cO(b)$ time per occurrence. This gives $\cO(n^3/b)$ time and $\cO(b)$ space.

\begin{proposition}\label{pro:simple}
    Given a string $T$ of length $n$, we can compute $\delta= \sup\{S_T(k)/k, k\geq 1\}$
    in $\cO(n^3/b)$ time using $\cO(b)$ space, for any $b\in[1,n]$, in the comparison model.
\end{proposition}

\begin{proof}
Our algorithm consists of $n/b-2$ phases.
In phase $\alpha$, for all $\alpha\in[2,3,\ldots,n/b-1]$,\footnote{We process the substrings of length $k\in[1,2b]$ separately: for each block $B_j$,
we compute an array $\textsf{LS}_j$ of size $b$ such that $\textsf{LS}_j[q]$ is the length 
the {\em longest substring of length up to $2b$} starting at position $q$ in $B_j$ that occurs in $T$ before position $(j-1)b+q$.
This is done as described in the proof of Lemma~\ref{lem:smallk} and requires $\cO(\frac{n^2\log b}{b})$ total time. At the end of this procedure, we just maintain $\max\{S_T(k)\mid k\in [1,2b]\}$.} we compute altogether the $b$ values of $S_T(k)$, for all $k\in [\alpha b +1,(\alpha+1) b]$. Let $\mathcal{S}=\mathcal{S}[1\dd b]$ be an array of size $b$ where we store the values of $S_T(k)$ corresponding to phase $\alpha$: $\mathcal{S}[h]=S_T(\alpha b + h)$, for $h\in[1,b]$.  At the end of phase $\alpha$ we maintain the maximum of $\mathcal{S}[h]/(\alpha b+h)$.
Clearly, at the end of the whole procedure, we can output $\delta= \sup\{S_T(k)/k~|~k\geq 1\}$.

We start by decomposing $T$ into $n/b$ blocks $B_1,B_2,\ldots,B_{n/b}$, each of length $b$.
We next describe our algorithm for a fixed phase $\alpha>1$. First we set $\mathcal{S}[h]=0$, for all $h\in[1,b]$. Let $i$ be a position on $T$. For each $k$ in the range of $\alpha$, we want to know if $T[i\dd i+k-1]$ has its first occurrence in $T$ at position $i$ or if it occurs also at some position to the left of $i$.
We process together all positions $i$ in the same block $B_j=T[(j-1)b+1\dd jb]$, for every $j\in[1,n/b]$. Let $L=B_j$ be the block we are currently processing (inspect also Figure~\ref{fig:algo_fig}).
To compute $S_T(k)$ we consider, for all $k\in [\alpha b+1,(\alpha+1) b]$, the length-$k$ fragments with starting position $i$ in $L$. All such fragments share the same \emph{anchor} $M=T[jb+1\dd  (j+\alpha-1)b]$. The fragment of length $k$ ends at position $i+k-1$, which belongs to one of the {\em two blocks} succeeding $M$ for all $k\in [\alpha b+1,(\alpha+1) b]$; we denote the concatenation of these two succeeding blocks as fragment $R$. In particular, we have $|M|=(\alpha-1)b$ and $R=B_{j+\alpha}B_{j+\alpha+1}$.

\begin{figure}[t]
    \centering
    \includegraphics[width=.8\linewidth]{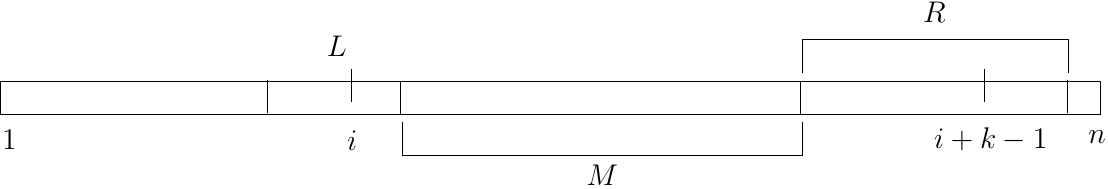}
    \caption{The main setting of the algorithm underlying Proposition~\ref{pro:simple}.}
    \label{fig:algo_fig}
\end{figure}

We will use the occurrences of $M$ in $T$ that start before its {\em starting position} $jb+1$ as anchors for finding possible occurrences of the length-$k$ fragments starting within $L$.
We search for such occurrences of $M$ with any linear-time constant-space pattern matching algorithm~\cite{DBLP:journals/jcss/GalilS83,DBLP:journals/jacm/CrochemoreP91,DBLP:journals/tcs/BreslauerGM13}. 
For each such occurrence of $M$, we then need to check the $b$ letters preceding it and the $2b-1$ letters following it in order to determine whether it generates a previous occurrence of some $(i,i+k-1)$-fragment, where $i$ is a position within the block $L$.
In particular, we check the $b$ letters preceding it because $L$ is the block of $b$ positions preceding $M$; we check the $2b-1$ letters following it because $k\leq(\alpha+1)b$.

While processing $L= B_j = T[(j-1)b+1\dd jb]$, we also maintain an array $\textsf{END}_L[1\dd b]$ of size $b$. After we have finished processing $L$, $\textsf{END}_L[q]$ will store the length $r_q$ of the longest prefix of $R$ such that $T[(j-1)b+q\dd (j-1)b+q+|M|+r_q-1]$ occurs in $T$ before position $(j-1)b+q$ (inspect Figure~\ref{fig:lcp}).
We compute $\textsf{END}_L$ as follows. We search for all the occurrences of $M=T[jb+1\dd  (j+\alpha-1)b]$ in $T[1\dd  (j+\alpha-1)b-1]$, from left to right. Let $M=T[i'\dd i'+|M|-1]$ be one such occurrence. Let $\ell$ be the length of the longest common suffix of $L$ and $T[1\dd i'-1]$; let $r$ be the length of the longest common prefix of $R$ and  $T[i'+|M| \dd n]$.
For each $q\ge b-\ell+1$, we update $\textsf{END}_L[q]$ with the maximum between its previous value $ \textsf{END}_L[q]$ and $r$ (note that we do not update any values if $\ell=0$).
After we have processed all the occurrences of $M$, for each $q$ we increase by $1$ all $S_T(k)$ such that $k> b-q+1+|M|+ \textsf{END}_L[q]=b-q+1+(\alpha-1)b+ \textsf{END}_L[q]=\alpha b-q+ \textsf{END}_L[q]+1$. 
This is an application of Observation~\ref{obs:occur}: all these occurrences correspond to a substring that is longer than a substring that occurs {\em for the first time} in $T$ at position $(j-1)b+q$. 

\begin{figure}[t]
    \centering
    \includegraphics[width=.8\linewidth]{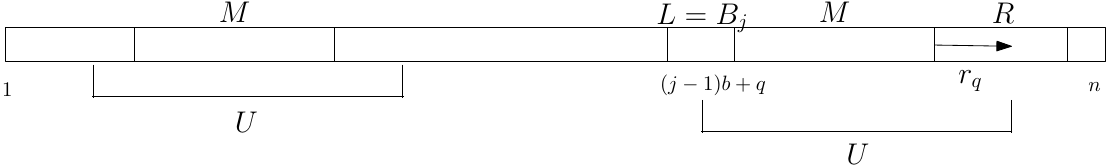}
    \caption{Largest $r_q$ such that $U=T[(j-1)b+q\dd (j-1)b+q+|M|+r_q-1]$.}
    \label{fig:lcp}
\end{figure}

The whole algorithm takes time $\cO(\frac{n^3}{b})$:
there are $\frac{n}{b}$ phases; in each phase, we consider $\frac{n}{b}$ blocks and for each block we spend $\cO(n)$ time for pattern matching anchor $M$; for each occurrence of the anchor, we spend $\cO(b)$ time for finding and updating the possible extensions, thus $\cO(nb)$ time overall. We finally need $\cO(b^2)$ time for updating the values of $S_T(k)$ for all $k$'s in the range and all positions $i$ in $L$. Overall this is $\cO(\frac{n}{b}\cdot \frac{n}{b} \cdot (nb+b^2))=\cO(\frac{n^3}{b})$ time.
\end{proof}

\section{$\cO(\frac{n^3\log b}{b^2})$ Time Using $\cO(b)$ Space in the Comparison Model}\label{sec:improve}

Recall that in Proposition~\ref{pro:simple}, we spend $\cO(nb+b^2)$ time to process the at most $n$ occurrences of a single anchor $M$ in $T$. We show here that all these occurrences can be processed in $\cO(n\log b)$ time. This is made possible by processing together batches of occurrences of $M$ that are {\em close enough} in $T$. This is done by means of answering longest common extension queries on suffix trees constructed for certain length-$\cO(b)$ fragments of $T$. 

The first trick is based on the following remark: The pattern matching algorithm for reporting the occurrences of $M$ (e.g.,~\cite{DBLP:journals/tcs/BreslauerGM13}) reports the occurrences of $M$ in real-time from left to right. Every such occurrence $m$ of $M$ is preceded by a block $L'$ of length $b$ on the left of $m$ starting at position $m-b$ and ending at position $m-1$, and it is succeeded by a fragment $R'$ of length $2b$ starting at position $m+|M|$ and ending at position $m+|M|+2b-1$. We thus need to find the longest common prefix of $R$ and $R'$ and the longest common suffix of $L$ and $L'$. 
Let us describe the process for the longest common prefix of $R$ and $R'$. (The procedure for the longest common suffix of $L$ and $L'$ is analogous and is executed simultaneously.) 

We use the so-called {\em standard trick} to construct a sequence of $n/(4b)$ suffix trees for fragments of $T$ of length $4b$ overlapping by $2b$ positions. We first concatenate each such fragment of length $4b$ with $R$. Constructing one such suffix tree takes $\cO(b\log b)$ time using $\cO(b)$ space~\cite{DBLP:conf/focs/Weiner73}. Recall that an occurrence $m$ of $M$ implies an occurrence of $R'$ at position $m+|M|$ and thus this position is part of some fragment of length $4b$. We preprocess this suffix tree in $\cO(b)$ time and space to answer longest common prefix queries in $\cO(1)$ time~\cite{DBLP:conf/latin/BenderF00}. The whole preprocessing thus takes $n/(4b)\cO(b\log b)=\cO(n\log b)$ time. Thus, for any occurrence $m$ of $M$ we can find the longest right extension (and the longest left extension with a similar procedure) in $\cO(1)$ time; recall that each extension cannot be of length greater than $2b$ so we do not miss any of them. To memorize the extensions we use an array $\textsf{END}_L$  of size $b$. For each occurrence of $M$, if we have a left extension of length $\ell>0$ and a right extension of length $r$, we set $\textsf{END}_L[b-\ell+1]=\max\{\textsf{END}_L[b-\ell+1],r\}$ in $\cO(1)$ time. At the end of this process we sweep through $\textsf{END}_L$ and set $\textsf{END}_L[q]=\max\{\textsf{END}_L[q-1],\textsf{END}_L[q]\}$, for all $i\in[2,b]$ by Observation~\ref{obs:occur}: if we can extend a position $q$ in $L$ $r$ positions to the right of $M$, then we must be able to extend position $q+1$ in $L$ at least $r$ positions to the right of $M$.

The second trick updates all values of $S_T(k)$ using array $\textsf{END}_L$ in $\cO(b)$ time instead of $\cO(b^2)$ time. We use an array $I$ of size $b$ with all its entries initialized to $0$; $I[h]$ will store the number of positions $q$ in $L$ such that the shortest unique substring starting at $q$ is of length $\alpha b + h$. We fill in $I$ scanning $\textsf{END}_L$: the shortest unique substring starting at $q$ is by definition of length $\alpha b-q+ \textsf{END}_L[q]+2$, which equals $\alpha b+h$ when $h=\textsf{END}_L[q]-q+2$.
We thus increment $I[h]$ by one. We finally increase $S_T(\alpha b+h)$ by $\sum_{j=1}^h I[j]$ for all $h=1,\ldots,b$. Thus, updating all values of $S_T(k)$ is implemented in $\cO(b)$ time. 
We have arrived at Theorem~\ref{the:n3}.

\section{$\tilde{\cO}(\frac{n^2}{b})$ Time Using $\tilde{\cO}(b)$ Space for $b\geq  \sqrt{n}$ in the word RAM model}\label{sec:main}
The algorithm underlying Theorem~\ref{the:n3} is organized in $n/b$ phases. In phase $\alpha$ we process $b$ values of $S_T(k)$ making use of evenly-spaced fragments of $T$, each of length $(\alpha-1)b$, as \emph{anchors} for finding possible multiple occurrences of the length-$k$ fragments of $T$.
Considering $\cO(n/b)$ anchors in each phase and processing them one by one is the bottleneck of this algorithm. Our approach here is thus to avoid the burden of considering new anchors at every phase by carefully selecting a set of anchors that will remain \emph{unchanged} in each phase of the algorithm. Let $c>1$ be any integer constant. We will process the values of $S_T(k)$ for $k\leq cb$ (Section~\ref{sec:smallk}) and for $k>cb$ (Sections~\ref{sec:regions} to~\ref{sec:mainlargek}) in two different ways.

We work in the word RAM model and our goal is a deterministic algorithm. 
Recall that a suffix tree of any string of length $d$ can be constructed in $\cOtilde(d)$ time using $\cO(d)$ space~\cite{DBLP:conf/focs/Weiner73,DBLP:conf/focs/Farach97}.

\subsection{Computing $S_T(k)$ for Small $k$}\label{sec:smallk}
We process together all values $k\in[1,cb]$.
Like in Sections~\ref{sec:simple}-\ref{sec:improve}, for such values of $k$ we split $T$ into $n/b$ blocks of $b$ positions and work with each such block separately; we compute all values $S_T(k)$ and keep track of $\max_{k\le cb}S_T(k)/k$ before computing $S_T(k)$ for all $k>cb$.

Consider block $L= B_j = T[(j-1)b+1\dd jb]$.
We compute an array $\textsf{LS}_L$ of size $b$ such that $\textsf{LS}_L[q]$ is the length 
the {\em longest substring} starting at position $q$ in $L$ that occurs in $T$ before position $(j-1)b+q$, if this length does not exceed $cb$, otherwise we set it to $\infty$.
This is done by constructing multiple generalized suffix trees of windows of length $2cb$ and $L$.

\begin{restatable}{lemma}{lemsmallk}
\label{lem:smallk}
$\max\limits_{k\le cb}\frac{S_T(k)}{k}$ can be computed in $\tilde{\cO}(n^2/b)$ time and $\cO(b)$ space, for any $b\in[1,n]$.
\end{restatable}
\begin{proof}
We consider a block $L=T[(j-1)b+1\dd jb]$ of $b$ positions of $T$ at a time; for each position $i$ of $T$ within $L$, we must compute the length of the longest fragment $T[i\dd \ell]$ that occurs to the left of position $i$, if this length does not exceed $cb$.
We consider windows of length $2cb$ over the prefix $T[1\dd (j+c)b]$, overlapping by $cb$ positions. Clearly, if a fragment $T[i\dd \ell]$ occurs earlier in $T$, then it must be a substring of at least one such window.
For a fixed $L$ we initialize all the $b$ positions of an array $\textsf{LS}_L$ to $0$; we then consider one window $W$ of $2cb$ positions at a time, from left to right. At the end of the computation for a window $W$, $\textsf{LS}_L[q]$ will store the length of the longest fragment starting at position $(j-1)b+q$ which occurs earlier in $T$. We proceed as follows to achieve this computation.

For the current window $W$ of length $2cb$, we concatenate $W$ and $T[(j-1)b+1\dd (j+c)b]=L\cdot T[jb+1\dd (j+c)b]$ (that is, block $L$ and the following $cb$ positions)
constructing a new string $S$; we use a separator letter that does not occur in either of the two strings. We then construct the suffix tree of $S$;
and from there on the Longest Previous Factor (LPF) array of $S$ in $\cO(|S|)=\cO(b)$ time~\cite{DBLP:journals/ejc/CrochemoreIIKRW13}.
The LPF array is an array of length $|S|$; for each position $i$ of $S$, it gives the length of the longest substring of $S$ that occurs both at $i$ and to the left of $i$ in $S$. 
Finally, we use this information to update the values of $\textsf{LS}_L$: $\textsf{LS}_L[q]$ maintains the maximum between its previous value and the new value computed for the current $W$. We proceed to the next window. Once we have processed all the windows, we use $\textsf{LS}_L$ to update the corresponding values of $S_T$ in $\cO(b)$ time the same way as we used $\textsf{END}_L$ in Section~\ref{sec:improve}.

The time and space complexity is as follows.
There are $n/b$ blocks in $T$, each of length $b$. For each such block, we consider $\cO(n/b)$ windows of $2cb$ positions each, and for each window, we construct the suffix tree and the LPF array of the two underlying fragments of length $\cO(b)$ in $\cOtilde(b)$ time using $\cO(b)$ words of space. The whole procedure, for all $n/b$ blocks and all $\cO(n/b)$ windows, thus requires $\cOtilde(\frac{n}{b}\frac{n}{b} b)=\cOtilde(n^2/b)$ time using $\cO(b)$ words of space.
\end{proof}

\subsection{$b$-Runs and $b$-Gaps}\label{sec:regions}
When $k>cb$, we process $b$ values of $S_T(k)$ at each phase, just like we did in Section~\ref{sec:improve}. 
Different from Section~\ref{sec:improve}, though, we aim at selecting a \emph{global} set of anchors, carefully chosen among the length-$b$ substrings of $T$.
At each phase, we will distinguish three types of substrings, depending on the period of their length-$b$ substrings. 
A $b$-\emph{run} is a maximal fragment of length at least $b$ such that each of its length-$b$ substrings is strongly periodic; 
a standard reasoning based on the periodicity lemma~\cite{PeriodicityLemma} shows that the period of each $b$-run is at most $b/4$, and so a $b$-run is indeed a run.
A $b$-\emph{gap} is a maximal fragment such that none of its length-$b$ substrings is strongly periodic.
Any fragment of $T$ of length at least $b$ and period at most $b/4$ is fully contained in a unique $b$-run; and every fragment of $T$ of length at least $b$ and such that none of its length-$b$ substrings is strongly periodic is fully contained in a unique $b$-gap.
At each phase, the substrings to be processed are thus of three types: (i) either they are fully contained in a $b$-gap, or (ii) they are fully contained in a $b$-run, or (iii) neither of the two.
We will process the substrings differently depending on their type.
A standard reasoning using the periodicity lemma~\cite{PeriodicityLemma} shows that two $b$-runs cannot overlap
by more than $b/2$ letters, so there are only $\cO(n/b)$ of them.
Lemma~\ref{lem:regions} states that we can identify and store the $b$-runs of $T$ in such space complexity.
For proving it we rely on the space-efficient construction of sparse suffix trees. The term ``sparse'' refers to constructing the compacted trie of an arbitrary subset of the set of the suffixes of the input string.

\begin{theorem}[\cite{DBLP:conf/soda/BirenzwigeGP20}]\label{the:sparseST}
Given a set $B\subseteq[n]$ of size $\Omega(\log n)\leq |B|\leq n$, there exists a deterministic algorithm which constructs the (sparse) suffix tree of $B$ in $\cO(n\log \frac{n}{|B|})$ time using $\cO(|B|)$ words of space.
\end{theorem}

\begin{restatable}{lemma}{lemregions}\label{lem:regions}
A representation of the $b$-runs of $T$ can be computed in $\cOtilde(n)$ time using $\cO(n/b)$ space, which is $\cO(b)$ space when $b\geq \sqrt{n}$.
\end{restatable}
\begin{proof}
We process windows of $b$ positions of $T$ at a time, with any two consecutive windows overlapping by $b/2$ positions. At each step, we compute the longest suffix, which has period at most $b/4$, of the window in $\cO(b)$ time~\cite{crochemore2007algorithms}. If such a suffix has nonzero length, we keep track of its starting position in $T$ and extend it na\"{i}vely to the right as much as possible. If this extension results in a run of length at least $b$, we store its starting and ending position in a list ordered by starting position and resume the process using the window starting $b-1$ positions before the end of the run. Otherwise, if the extension results in a run shorter than $b$, we ignore it. Whenever we identify a $b$-run, we compute its root $t$ in $\cO(b)$ time~\cite{duval1983factorizing}, and store in a list the starting and ending position $(s_r,e_r)$ of its root and the starting and ending position $(s,e)$ of the $b$-run (as mentioned above). After computing all $b$-runs in $T$, we construct the sparse suffix tree over the set of all $s_r$ positions in the list. 
Each internal node of the sparse suffix tree, corresponding to a root of a $b$-run of $T$, is associated with the list of the starting and ending positions $(s,e)$ of the $b$-runs corresponding to this root.

This procedure identifies all the $b$-runs of $T$. Indeed, consider a window $T[i\dd i+b-1]$. If a $b$-run $Y$ with period $p\leq b/4$ begins between position $i$ and position $i+b-1-p$, a prefix of it of length greater than $p$ is a suffix of the window with period $p$. If it is the longest such suffix, it will be extended to the right allowing the identification of the whole $Y$. Otherwise, suppose there is a longer suffix of $T[i\dd i+b-1]$ with period $b/4\ge p'> p$ (it cannot be $p'<p$, because otherwise, $p'$ would have been the period of the whole suffix) that includes the whole prefix of $Y$ in $T[i\dd i+b-1]$. 
In this case, we only extend the longer suffix and do not find $Y$ at this stage. However, the longer suffix with period $p'\leq b/4$ is part of a run that overlaps with $Y$, and therefore such overlap must be shorter than $b/2$ because of the periodicity lemma~\cite{PeriodicityLemma}. This means: (a) this situation can only happen when the prefix of $Y$ in $T[i\dd i+b-1]$ is shorter than $b/2$, thus a longer prefix of $Y$ will be a suffix of the next window $T[i+b/2\dd i+3b/2-1]$; and (b) the period $p'$ must break before the end of $T[i+b/2\dd i+3b/2-1]$, thus the prefix of $Y$ in $T[i+b/2\dd i+3b/2-1]$ must be the longest suffix with period at most $b/4$ and will therefore be extended, allowing to identify the whole $Y$.
Finally, if $Y$ begins between position $i+b-p$ and position $i+b-1$ of $T[i\dd i+b-1]$, its prefix included in the window does not have a period $p$, and will therefore not be extended. However, the next window is $T[i+b/2\dd i+3b/2-1]$: since the length of any $b$-run is at least $b$, a prefix of the $b$-run of length greater than $b/2$ is now a suffix of the window, and since $p\leq b/4$, it will be extended to the right allowing the identification of the whole $b$-run. 

The time and space complexity is as follows. We consider $\cO(n/b)$ windows of length $b$. At each step, we spend $\cO(b)$ time to compute the longest suffix of the current window with period at most $b/4$. Whenever we identify a suffix of a run $Y$ with period at most $b/4$, we extend it na\"{i}vely to the right in $\cO(|Y|)$ time, and the next window we consider only covers the last $b-1$ positions of $Y$. Since consecutive $b$-runs can only overlap by less than $b/2$ positions because of the periodicity lemma~\cite{PeriodicityLemma}, they are at most $\cO(n/b)$ and their total length is $\cO(n)$, so it takes $\cO(n)$ time to perform all extensions. 
For each $b$-run, we spend $\cO(b)$ time to compute its root.
For the sparse suffix tree, we employ Theorem~\ref{the:sparseST}.
Hence the overall time complexity is $\cOtilde(n)$. As for the space, we process blocks of $\cO(b)$ positions in $\cO(b)$ space. We also store a pair of positions for each $b$-run, therefore the space required to store them is $\cO(n/b)$, which is $\cO(b)$ when $b\ge\sqrt{n}$. 
\end{proof}
The output of Lemma~\ref{lem:regions} is a list representing all the $b$-runs of $T$ in the natural left-to-right order. The $b$-gaps can be deduced from this list
as follows: if $T[i\dd j]$ and $T[i'\dd j']$ are two consecutive $b$-runs in the list, then $T[j-b+2\dd i'+b-2]$ is a $b$-gap (if $T[i\dd j]$
is the first run, then so is $T[1\dd i+b-2]$, and similarly for the last run).

A subset of the length-$b$ substrings of $T$ is a \emph{valid set of anchors} if two properties hold: (i) at least one anchor occurs in each fragment of $T$ of length $cb$; and (ii) the total number of occurrences of all anchors in $T$ is in $\cO(n/b\cdot\log n)$. Lemma~\ref{lem:anchorexist} shown next will be useful to prove that there always exists a set of valid anchors included in the $b$-gaps of $T$.  


\begin{restatable}{lemma}{lemanchorexist}\label{lem:anchorexist}
Let $Z$ be a string with all length-$d$ substrings not strongly periodic, and $c>1$ be any integer constant. Then we can compute 
in $\cOtilde(|Z|^2/d)$ time and $\cOtilde(|Z|/d+d)$ space
a subset $A$ of the length-$d$ substrings of $Z$ such that: (i) at least one $h\in A$ occurs in each fragment of $Z$ of length $cd$; and (ii) the total number of occurrences of all $h\in A$ in $Z$ is $\cO(|Z|/d\cdot\log |Z|)$.
\end{restatable}
\begin{proof}
Let us start with a high-level idea of the proof. 
We first reduce the problem to the following: we have $\cO(|Z|/d)$ strings $Z_{i}$, each of length
$5d/4$ and with all length-$d$ substrings not strongly periodic, and a set of $\cO(|Z|)$ possible anchors consisting of all length-$d$
substrings of the $Z_{i}$s.
We want to choose a subset $A$ of the anchors such that (i) at least one $h\in A$ occurs in each $Z_{i}$, (ii) the total number of
occurrences of all $h\in A$ in the $Z_{i}$s is $\cO(|Z|/d\cdot \log |Z|)$.
This is a special case of the Node Selection problem, considered in~\cite{bernardini2021elasticdegenerate}
 as a strengthening of the well-known Hitting Set problem.\footnote{Let us remark that this problem has already been considered in the conference
 version~\cite{DBLP:conf/icalp/0001GPPR19}, with a slightly different definition but essentially the same proof. However, our goal
 is a deterministic algorithm and to this end we need~\cite{bernardini2021elasticdegenerate}, the extended version of~\cite{DBLP:conf/icalp/0001GPPR19}.}
 Indeed, we can take $U$ to be the set of strings $Z_{i}$, $V$ to be the set of possible
anchors, and add an edge $(u,v)$ in $G(U,V,E)$ when the possible anchor corresponding to $v$ occurs in the string $Z_{i}$ corresponding to $u$.
Because every possible anchor is not strongly periodic and every $Z_{i}$ is of the same length $5d/4$, 
the degree of every node $u\in U$ is $5d/4$. Then, by Lemma 5.4 of~\cite{bernardini2021elasticdegenerate} (the weights are irrelevant)
we can choose a set $V'\subseteq V$ such that (i) $N[u]\cap V' \neq \emptyset$ for every $u\in U$,
(ii) $\sum_{u\in U} |N[u]\cap V'| = \cO(|U|\log |U|) = \cO(|Z|/d \cdot \log |Z|)$, so $V'$ corresponds to a set
of anchors $A'$ with the sought properties.
Furthermore, $V'$ can be found in linear time and space in the size of $G$, which is $\cO(|Z|)$. This is however not enough
for our purposes, as we cannot store the whole $G$. Analysing the algorithm used inside
the proof of Lemma 5.4 of~\cite{bernardini2021elasticdegenerate} we see that it considers the nodes $v\in V$ one-by-one
while maintaining some information of size $\cO(|U|)=\cO(|Z|/d)$ and a precomputed table of a size that can be bounded by
the maximum degree of any $u\in U$, which is $\cO(d)$. Furthermore, the algorithm accesses $G$ only by iterating a constant number
of times over the neighbours of the current node $v\in V$.
In what follows, we show how to implement this efficiently in our model to achieve the claimed bounds.

The first step is to obtain the strings $Z_{i}$. Consider windows of $5d/4$ positions of $Z$ overlapping by $d$ positions,
and let $Z_{i}$ be the $i$th such window from left to right.
Note that, since the starting positions of any two consecutive windows are $d/4$ positions apart, there are $\cO(|Z|/d)$ such windows in $Z$.
We claim that selecting a set of anchors, each of length $d$, such that at least one anchor occurs in each $Z_i$, implies property (i).
This is because $Z_i$ and $Z_{i+1}$ span together $5d/4+d/4=3d/2$ positions of $Z$, and thus any fragment of $Z$ of length $cd\geq 2d$
fully contains at least one window $Z_i$.
We thus aim at selecting a set $A$ of length-$d$ substrings of $Z$ (anchors) such that at least one $h\in A$ occurs in $Z_i$ for all $i$
and such that the total number of occurrences of all anchors in all windows $Z_i$ is $\cO(|Z|/d\cdot\log |Z|)$. Since any position of $Z$
belongs to a constant number of windows, the second requirement is enough to guarantee that property (ii) holds.

The next step is to simulate iterating over the nodes $v\in V$. We iterate over windows of $5d/4$ positions of $Z$ overlapping
by $d$ positions from left to right. For the current window $W$ we need to check which of the length-$d$ fragments of $W$ are
their leftmost occurrences. To compute this information we construct the suffix tree $\textsf{ST}(W)$ of each window $W$ in time
$\cOtilde(d)$ and space $\cO(d)$.
 We search each length-$d$ substring of the prefix of $Z$ up to the end of $W$ in $\textsf{ST}(W)$ in time $\cOtilde(d)$, and we mark the nodes corresponding to the ones we find.
This can be done by scanning with a window of length $d$ and maintaining the longest prefix of the current window with
a corresponding node (implicit or explicit) in the suffix tree. After moving the window by one to the right, we follow the suffix link (if we are
at an implicit node we use the suffix link of its nearest explicit ancestor) and then possibly descend down; this takes amortised constant time.
We then consider each length-$d$ substring $S$ of $W$ whose corresponding node in $\textsf{ST}(W)$ is not marked in the natural left-to-right order, while maintaining the corresponding node
of the suffix tree (we are guaranteed that such a node exists) as explained above.
This gives us the information about the length-$d$ fragments of $W$ that should be considered as nodes $v\in V$,
so that we can iterate over them efficiently.

Having the length-$d$ fragment $S$ corresponding to the current node $v\in V$, we need to simulate generating the neighbours
of $v\in V$ in $G$, which translates into generating the occurrences of $S$ in all fragments $Z_{i}$.
It is enough to implement this step in $\cO(|Z|/d)$ time (even when there are very few occurrences) as this will sum up to $\cO(|Z|^{2}/d)$.
We observe that every $S$ in the current window $W$ includes the same fragment of length $3d/4$, namely $m=W[d/4\dd d-1]$,
so that $W$ can be written as $x\cdot m\cdot y$. This means that any occurrence of any such length-$d$ substring $S$ in $Z_i$
must be in correspondence with an occurrence of $m$ in $Z_i$.
If $m$ is not strongly periodic, we can compute and store all of its occurrences in every $Z_{i}$ in total $\cO(|Z|)$ time and $\cO(|Z|/d)$
space with a linear-time constant-space pattern matching algorithm. Additionally, for each such occurrence $q$ of $m$ in $Z_{i}$ (there are $\cO(1)$ of
them because $m$ is not strongly periodic and $|Z_{i}|=5d/4$)
we store the following information. We write $Z_{i}$ as $x'_q \cdot m \cdot y'_q$ and compute in $\cO(d)$ time the longest common suffix of $x$ and $x'_q$
and the longest common prefix of $y$ and $y'_q$.
Then, to check if the current $S$ occurs in $Z_{i}$, we write $S$ as $x'' \cdot m \cdot y''$ and iterate over the occurrences $q$ of $m$
in $Z_{i}$.  
Then we check if the stored longest common suffix for occurrence $q$ is of length at least $|x''|$
and the stored longest common prefix is of length at least $|y''|$. Thus, for each $S$ we can check if it occurs in $Z_{i}$
in $\cO(1)$ time. Together with the preprocessing done for each window, this sums up to $\cO(|Z|^{2}/d)$ as claimed.

Finally, we need to explain what to do when $m=W[d/4\dd d-1]$ is strongly periodic. In such case, we find the rightmost position
$i_1\leq d/4$ such that the period of $W[i_1\dd i_1+3d/4-1]$ is larger than $d/4$. Similarly, we find the leftmost position $i_2\geq  d/4$
such that the period of $W[i_2\dd i_2+3d/4-1]$ is larger than $d/4$. By appending and prepending special letters to $W$, we
can assume that both $i_1$ and $i_2$ are well-defined. By a standard argument based on the periodicity lemma,
see e.g.~\cite[Lemma 4]{bernardini2021elasticdegenerate}
the period of the whole $W[i_1+1 \dd i_2-1]$ is at most $d/4$. Therefore, any length-$d$ fragment must intersect
position $i_1$ or $i_2$ as all such fragments are not strongly periodic. Consequently, we can consider
$m_{1}=W[i_1\dd i_1+3d/4-1]$ and $m_{2}=W[i_2-3d/4 +1 \dd i_2]$.
Both $m_{1}$ and $m_{2}$ are not strongly periodic by the choice of $i_1$ and $i_2$, and because $|W|=5d/4$ any length-$d$
substring of $W$ contains either $m_{1}$
or $m_{2}$ fully inside. Thus, we can apply the reasoning from the above paragraph twice.
\end{proof}

\subsection{Processing the $b$-Gaps}\label{sub:baperiodic}
For ease of presentation, in this section, we will assume that all length-$b$ substrings of $T$ are not strongly periodic, but
no major changes are required to apply the same reasoning on the set of all $b$-gaps.
Assume we have already computed a set $A$ of valid anchors over $T$.
For each $h\in A$, we compute a list of its occurrences in $T$. The overall size of these lists is $\cO(n/b\cdot\log n)$
because of property (ii), and the occurrences of each $h\in A$ can be generated in $\cO(n)$ time and $\cO(1)$ space
(plus the space to store the list) with any linear-time constant-space pattern matching algorithm, so $\cOtilde(n^{2}/b)$ time overall.
We divide the computation of $S_T(k)$ in $n/b$ phases.
Consider phase $\alpha$, in which we consider substrings of length $k\in [\alpha b+1,(\alpha+1)b]$. Because of property (i), at least one anchor occurs in the first $cb$ positions of each such substring.
We conceptually associate such a substring with the leftmost anchor $h\in A$ occurring therein, and we say that a fragment of $T$ is \emph{anchored} at an occurrence $i$ of some anchor $h$ if the leftmost occurrence of any anchor in the fragment is $i$. We then process the substrings according to the anchor with which they are associated.

All substrings associated with an anchor $h\in A$ have a (possibly empty) prefix of length $\cO(b)$ where no anchors occur, followed by $h$ and then by a suffix where any anchor can occur. This implies that any occurrence of such substrings can only start in a range of $\cO(b)$ positions preceding some occurrence of $h$ in $T$. In particular, if $h$ occurs at position $i$ in $T$ and the closest anchor to its left is at position $i'<i$, the \emph{starting range} of substrings of $T$ associated with $h$ is $[i'+1, i]$, or $[1,i]$ if $i$ is the first occurrence of any anchors in $T$. All starting ranges for all anchors can be computed in $\cO(n)$ time by scanning the list of occurrences of the anchors.
To update the values of $S_T(k)$ with the substrings associated with $h$ we need to know, for each occurrence $i$ of $h$ in $T$ and each of its previous occurrences $i'<i$, the longest left extension within the starting ranges of $i$ and $i'$, 
and the longest right extension of the fragments of $T$ following the occurrences of $h$ at $i$ and $i'$. 
We cannot afford to store all these pairs of values explicitly as this would require $\cOtilde(n^2/b^2)$ space. We thus construct a separate data structure, denoted by $D(h)$, for each anchor $h\in A$. This data structure encode the same information in a compact form. We next describe the data structure and its construction.

\begin{figure}[t]
    \centering
    \includegraphics[width=.8\linewidth]{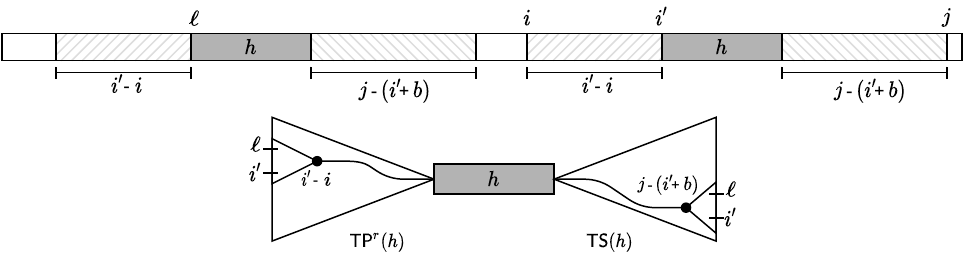}
    \caption{A previous occurrence of $T[i\dd j]$ anchored at $i'$ can be detected using $D(h)$.}
    \label{fig:D(h)}
\end{figure}

$D(h)$ consists of two \emph{compacted} tries $\textsf{TP}^{r}(h)$ and $\textsf{TS}(h)$. 
For every occurrence $i$ of $h$ in $T$, $\textsf{TS}(h)$ contains a leaf corresponding to $T[i+b\dd n]$, and $\textsf{TP}^{r}(h)$ a leaf corresponding to $(T[1\dd i-1])^{r}$, both labelled with position $i$.
We only store the list of children and the length of the path label of each node, which we call its \emph{depth}. Because of property (ii), the overall size of these data structures for all anchors is thus in $\cO(n/b\cdot \log n)$.
For any two occurrences $i,i'$ of $h$, the depth of the lowest common ancestor of leaves $i$ and $i'$ in $\textsf{TP}^{r}(h)$ gives the length of their longest left extension, and the depth of their lowest common ancestor in $\textsf{TS}(h)$ gives the length of their longest right extension: see Figure~\ref{fig:D(h)} for an example.
$D(h)$ can be efficiently constructed for all $h\in A$, as shown by Lemma~\ref{lem:D(h)}. 

\begin{restatable}{lemma}{lemDh}\label{lem:D(h)}
Data structures $D(h)$, for all $h\in A$, can be constructed in $\tilde{\cO}(n^{2}/b)$ total time using $\tilde{\cO}(n/b)$ space, which is $\cOtilde(b)$ when $b\geq \sqrt{n}$.
\end{restatable}
\begin{proof}
Let $\textsf{occ}(h)$ be the list of occurrences of anchor $h$ in $T$, let $B=\bigcup_{h\in A}\textsf{occ}(h)$ and $B'=\bigcup_{h\in A}\textsf{occ}(h)+b-1$. 
Recall that $D(h)$ consists of two \emph{compacted} tries $\textsf{TP}^{r}(h)$ and $\textsf{TS}(h)$. We will first construct two global compacted tries $\textsf{TP}^{r}(A)$ and $\textsf{TS}(A)$ for all anchors in $A$, and then extract from them subtries $\textsf{TP}^{r}(h)$ and $\textsf{TS}(h)$ for each $h\in A$. 

$\textsf{TP}^{r}(A)$ and $\textsf{TS}(A)$ are constructed in the same way, except that for $\textsf{TP}^{r}(A)$ we consider the reversal of strings. 
To construct $\textsf{TS}(A)$ we employ Theorem~\ref{the:sparseST} on set $B$, as it is essentially the sparse suffix tree for the suffixes starting at positions in $B$; and to construct $\textsf{TP}^r(A)$ we employ Theorem~\ref{the:sparseST} on the reverse of $T$ and set $B'$.
Once we have constructed $\textsf{TS}(A)$ and $\textsf{TP}^r(A)$, to extract subtries $\textsf{TS}(h)$ and $\textsf{TP}^r(h)$ for $h\in A$ it suffices to spell $h$ from the root of $\textsf{TS}(A)$ (resp. $h^r$ from the root of $\textsf{TP}^r(A)$) and take the subtrie below.

The time and space complexity of computing $\textsf{TS}(A)$ and $\textsf{TP}^r(A)$ is as follows. The size of sets $B$ and $B'$ is $\cO(n/b\cdot\log n)=\cO(b\log n)$ when $b\geq \sqrt{n}$, thus by Theorem~\ref{the:sparseST} we make use of $\cO(b\log n)$ words of space. Again by Theorem~\ref{the:sparseST}, the overall time complexity to construct them is $\tilde{\cO}(n)$. To find the right subtrie for each $h\in A$ we then spend $\cOtilde(b)$ time for each of the $\cO(n/b\log n)$ anchors of $A$, thus again $\cOtilde(n)$ time overall.
\end{proof}
\subparagraph{Computing $S_T(k)$ Using $D(h)$.}~Similar to Section~\ref{sec:improve}, in each phase $\alpha$ we fill in an auxiliary array $I=I[1\dd b]$ such that, at the end of the phase, $I[q]$ contains the number of positions $i$ in $T$ such that the shortest substring that does not occur in $T$ before position $i$ is of length $\alpha b+q$. 
We proceed as follows.
We consider one position of $T$ at the time, from left to right. When we are at position $i$, let $h$ be the leftmost anchor occurring at some position $i'\ge i$. We binary search for the smallest position $j$ such that $T[i\dd j]$ does not occur to the left of $i$ using $D(h)$. 
We first identify in $\textsf{TP}^{r}(h)$ the highest ancestor $u$ of leaf $i'$ with string depth at least $i'-i$. This corresponds to answering a \emph{weighted level ancestor} query~\cite{DBLP:conf/cpm/FarachM96} on $\textsf{TP}^{r}(h)$, where the weight of each node is its depth. After linear-time preprocessing, weighted ancestor queries for nodes of a weighted tree with integer weights from a universe $[1 \dd U]$ can be answered in $\cO(\log \log U)$ time~\cite{DBLP:journals/talg/AmirLLS07}. In our case, the queries thus cost $\cO(\log \log n)$ time.

We then start binary searching for the leftmost position $j$ such that $T[i\dd j]$ does not occur to the left of position $i$ and such that $|T[i\dd j]|\in[\alpha b+1, (\alpha + 1)b]$: we thus look for $j$ in the range $[i+\alpha b,i+(\alpha+1)b-1]$. For each value $j$ considered in the binary search, we find in $\textsf{TS}(h)$ the highest ancestor $v$ of leaf $i'$ with string depth at least $j-(i'+b)$, by answering a weighted level ancestor query.
We then need to check whether $T[i\dd j]$ occurs somewhere to the left of $i$, in correspondence of a previous occurrence of anchor $h$, in which case we increase $j$ in the next step; or it does not occur before, in which case we decrease $j$.
We do so by looking at the leaves (occurrences of $h$) in the subtree below $u$ in $\textsf{TP}^{r}(h)$, denoted by $\textsf{TP}^{r}(h)|u$, and in the subtree below $v$ in $\textsf{TS}(h)$, $\textsf{TS}(h)|v$.
Every leaf in the intersection of the two subsets of leaves corresponds to an occurrence of $T[i\dd j]$ in $T$. The information we need is whether $i'$ is the smallest leaf in the intersection, meaning that $T[i\dd j]$ does not occur anywhere before. This reduces to a 2D range searching problem.

We assume that each leaf of each tree has a unique identifier, independent from their label and such that the identifiers of the leaves of any subtree form a contiguous range. For each leaf $\ell$, its identifiers in $\textsf{TP}^{r}(h)$ and $\textsf{TS}(h)$ give the coordinates of a point on a plane, to which we assign $\ell$ as weight. By construction, the points corresponding to leaves in the intersection of $\textsf{TP}^{r}(h)|u$ and $\textsf{TS}(h)|v$ are contained in a rectangle: we need to find the point with the smallest weight there and check whether it is $i'$ or not.
Such queries can be answered in time $\cO(\log s)$ with a data structure that is constructed in time and space $\cO(s\log s)$, where $s$ is the total number of points~\cite{chazelle1988functional}.
At the end of the binary search, if $j=i+\alpha b+q$ we increase the counter at $I[q]$ by one, unless $j=i+(\alpha+1)b-1$ and $T[i\dd j]$ occurs before $i$, in which case we do not increase any counters. We finally move to the next position of $T$.

\begin{lemma}\label{lem:baperiodic}
Assume that all length-$b$ substrings of $T$ are not strongly periodic.
Then $\delta$ can be computed in $\tilde{\cO}(n^2/b)$ time using $\tilde{\cO}(n/b+b)$ space, which is $\tilde{\cO}(b)$ when $b\geq \sqrt{n}$.
\end{lemma}

\begin{proof}
Set $A$ is selected in $\tilde{\cO}(n^2/b)$ time and $\tilde{\cO}(n/b+b)$ space as per Lemma~\ref{lem:anchorexist}, and $D(h)$ can be computed in the same time and space for all $h\in A$ and all phases, as per Lemma~\ref{lem:D(h)}. In each phase $\alpha$, we go over the $n$ positions of $T$ one at a time. At each position $i$ we binary search for the shortest substring not occurring before $i$ in $\cO(\log \alpha b)$ steps, each requiring $\cO(\log n)$
time. Over all $\cO(n/b)$ phases, this requires $\tilde{\cO}(n^2/b)$ time and $\tilde{\cO}(n/b)$ space.
\end{proof}

\subsection{Processing the $b$-Runs}\label{sub:bperiodic}
Recall that we have computed, as per Lemma~\ref{lem:regions}, a representation of all the $b$-runs of $T$. In this section, we only focus on the substrings of length at least $b$ and periods at most $b/4$. Every occurrence of such a substring is fully contained in some $b$-run, and for ease of presentation we will assume that in phase $\alpha$, in which we process substrings of length $k\in[\alpha b+1,(\alpha+1)b]$, every $b$-run is longer than $\alpha b$. Observe that each substring of a $b$-run $T[s\dd e]$ with root $t$ occurs also as a prefix of some fragment starting within the first $|t|$ positions of the run, which we call its \emph{relevant} range. Since we aim to identify the leftmost occurrence of each substring of $T$, we can ignore all positions of a $b$-run after its relevant range. By slightly abusing notation, we select as anchors some fragments of the $b$-runs of $T$, instead of selecting substrings together with the whole set of their occurrences. However, this set of anchors must have the following property, that for the anchors of Section~\ref{sub:baperiodic} held naturally: for any two occurrences of the same substring in the relevant ranges, the leftmost occurrence of any anchor therein is at the same offset from the beginning of the substring. In phase $\alpha$ we use as anchors the first two occurrences of the root in each $b$-run: let $H$ be this set of fragments of $T$. Clearly, $H$ is of size $\cO(n/b)$ because the representation of all the $b$-runs is of such size.
\begin{restatable}{lemma}{lemperiodicanchors}\label{lem:periodicanchors}
For any two occurrences of the same substring of length at least $b$ and period at most $b/4$, both starting in the relevant ranges of the $b$-runs of $T$, the leftmost occurrence of any $h\in H$ in each of them is at the same offset from the beginning of the substring.
\end{restatable}

\begin{proof}
Let $Y=t[d\dd |t|]t^{\beta}t[1\dd f]$ be a fragment occurring at the first $t$ positions in some $b$-run $T[s\dd e]=t[q\dd |t|]t^{\gamma}t[1\dd g]$. The anchors within $T[s\dd e]$ are, by definition, the occurrences of $t$ at position $p_1=s+(|t|-q+1)\mod |t|$ and $p_2=p_1+|t|$. If $d\geq q$, the leftmost occurrence of any anchors in $Y$ is at $p_1$, which is at offset $|t|-d+2$ in $Y$. Otherwise, if $d<q$, the leftmost occurrence of any anchors in $Y$ is at $p_2$, which is in any case at offset $|t|-d+2$ in $Y$.

Consider another occurrence of $Y$ in the first $|t|$ positions of some other $b$-run $T[s'\dd e']=t[q'\dd |t|]t^{\gamma'}t[1\dd g']$. The anchors are the occurrences of $t$ at position $p'_1=s'+(|t|-q'+1)\mod |t|$ and $p'_2=p'_1+|t|$; depending on whether $d\geq q'$ or not, the leftmost occurrence of any anchor in this occurrence of $Y$ is either $p'_1$ or $p'_2$, in either case at offset $|t|-d+2$ in $Y$.
\end{proof}
Let $P$ be the set of roots of the $b$-runs of $T$.
We construct a data structure $D(P)$ for all roots $t\in P$ similar to what we do in Section~\ref{sub:baperiodic}, but we use only the occurrences of $t$ corresponding to fragments in $H$. We then proceed as described in Section~\ref{sub:baperiodic} to fill in array $I$, except that, in each $b$-run with root $t$, we disregard any position after the first $|t|$. 

We have arrived at the following lemma.

\begin{lemma}\label{lem:processbruns}
The substrings of $T$ that are fully contained within a $b$-run can be processed in $\cOtilde(n^2/b)$ time using $\cO(n/b)$ space, which is $\cO(b)$ when $b\geq \sqrt{n}$.
\end{lemma}

\subsection{Computing $S_T(k)$ for Large $k$}\label{sec:mainlargek}
The occurrences of anchors $h\in A$ selected for the $b$-gaps anchor all the fragments
fully contained in a $b$-gap and possibly some other fragments. However, we are not guaranteed
that this holds for any fragment not fully contained in a $b$-run.
Consider a fragment $T[i\dd j]$ of length at least $b$ with period larger than $b/4$ (thus, not contained
in any $b$-run) but containing a strongly periodic length-$b$ fragment $T[i'\dd j']$ inside (so, not contained in any $b$-gap).
Then, $T[i'\dd j']$ is fully contained in some $b$-run $T[s\dd e]$. Because $T[i\dd j]$ is not fully contained
in $T[s \dd e]$, either $T[s-1 \dd s+b-2]$ or $T[e-b+2\dd e+1]$ (that is, a length-$b$ substring with exactly one letter
before or after the $b$-run) is fully within $T[i\dd j]$. 
This suggest that we should augment $A$ with the following length-$b$ substrings: for each $b$-run $T[s\dd e]$, $T[s-1\dd s+b-2]\in A$ and $T[e-b+2\dd e+1]\in A$, and we consider all their occurrences in $T$.
By the above reasoning, this guarantees that $T[i\dd j]$ contains an occurrence of some anchor inside.
We are defining only $\cO(n/b)$ new anchors, but then we need to consider all of their occurrences.
Therefore, we need to argue that the total number of occurrences of the new anchors is $\cO(n/b)$.
It is enough to show this for the occurrences of the anchors $T[s-1 \dd s+b-2]$, where the period of $T[s \dd s+b-2]$ is at most $b/4$.
We claim that for any two such occurrences $T[s-1 \dd s+b-2]$ and $T[s'-1 \dd s'+b-2]$ with $s<s'$
we have $s+b/2 < s'$: otherwise $T[s \dd s+b-2]$ and $T[s' \dd s'+b-2]$ overlap by at least $b/2$ positions,
but two $b$-runs cannot overlap by $b/2$ positions, a contradiction.
We generate all these occurrences and then process all the anchors as in Section~\ref{sub:baperiodic}. 
The only difference is the starting range associated with the anchors obtained from the suffix of some $b$-run: when they are not preceded by another anchor within $cb$ positions, we take as starting range the $\alpha b$ positions preceding the anchor. 

Let us put everything together.
Before computing $S_T(k)$ in phases, we identify the $b$-runs and the $b$-gaps of $T$ as per Lemma~\ref{lem:regions}. We then extract a set of anchors from the $b$-gaps as described in Lemma~\ref{lem:anchorexist}, and we complement it with the length-$b$ substrings that start one position before each $b$-run, and with the length-$b$ substrings that end one position after the end of each $b$-run, to complete the set $A$ of anchors. We then compute the list of occurrences of each $h\in A$; we also identify the relevant ranges within each $b$-run.
We then proceed in phases.
In each phase, we scan $T$ from left to right and process all positions in $b$-gaps as per Section~\ref{sub:baperiodic}. All positions within a $b$-run are processed as per Section~\ref{sub:bperiodic}, and additionally as per Section~\ref{sub:baperiodic}, when they are within the starting range of an occurrence of some $h\in A$. At the end of a phase $\alpha$, we have computed an auxiliary array $I$ such that $I[h]$ gives the number of positions $i$ of $T$ such that the shortest substring that does not occur in $T$ before position $i$ is of length $\alpha b+h$. We use $I$ to compute $S_T(k)$ for each $k\in[\alpha b+1, (\alpha+1)b]$ as in Section~\ref{sec:improve}.

By combining Lemmas~\ref{lem:baperiodic} and~\ref{lem:processbruns} we arrive at Theorem~\ref{the:n2}, the main result of this paper.

\section{Substring Complexity from the Combinatorial Point of View}\label{sec:combinatorial} 
%
Knowing the substring complexity of a string can also be used to find other regularities. To mention a few, we have the following straightforward implications in sublinear working space: 
\begin{itemize}
    \item $T$ has a substring of length $k$ repeating in $T$ if and only if $S_T(k)<n-k+1$.
    This yields the length $r$ of the longest repeated substring of $T$ (also known as the repetition index of $T$)~\cite{DBLP:conf/focs/Weiner73}. 
     It is worth noticing that $S_T(k+1)=S_T(k)-1$ for every $k> r$~\cite{DBLP:journals/tcs/Luca99} and that $r$ approximates $\cO(\log_{|\Sigma|} n)$ when $T$ is randomly generated by a memoryless source~\cite{DBLP:journals/tcs/FiciMRS06}.
    \item A string $S$ is called a {\em minimal absent word} of $T$ if $S$ does not occur in $T$ but all proper substrings of $S$ occur in $T$. The length $\ell$ of a longest minimal absent word of $T$ is equal to $2+r$~\cite{DBLP:journals/tcs/FiciMRS06}. This quantity is important because if two strings $X$ and $Y$ have the same set of distinct substrings up to length $\ell$, then $X=Y$~\cite{DBLP:journals/tcs/FiciMRS06,DBLP:journals/tcs/CarpiL01}. The length of a shortest absent word~\cite{DBLP:journals/ipl/WuJS10} of $T$ over alphabet $\Sigma$ is equal to the smallest $k$ such that $S_T(k)<|\Sigma|^k$.
\item The longest common substring of strings $X$ and $Y$ is equal to the largest $k$ such that $S_X(k)+S_Y(k)>S_{X\#Y}(k)-k$, where $\#$ does not occur in $X$ nor in $Y$, since there are precisely $k$ distinct substrings of length $k$ containing the letter $\#$ in $X\#Y$.
\end{itemize}

The substring complexity function is well studied in the area of combinatorics on words, both for finite and infinite strings. 
However, the normalization $S(k)/k$ and its supremum $\delta$ have not been considered until very recently. In~\cite{DBLP:journals/tcs/Luca99} it is proved that the substring complexity $S_T(k)$ of a string $T$ takes its maximum precisely for $k=R$, where $R$ is the minimum length for which no substring of $T$ has occurrences followed by different letters, and one has $S_T(R)=n+1-\max\{R,K\}$, where $K$ is the length of the shortest unrepeated suffix of $T$. 
But this seems to be of little help in understanding the behaviour of the \emph{normalized} substring complexity  $S_T(k)/k$.

\section{Approximating $\delta$ in Sublinear Space}\label{sec:finale}

Our algorithms compute the \emph{exact value} of $\delta$. If one is interested in a constant-factor approximation of $\delta$ (e.g., an algorithm's complexity has a polynomial dependency on $\delta$~\cite{DBLP:journals/tit/KociumakaNP23}), then there is a simple algorithm in our model based on the following combinatorial observation, which follows directly by the number of fragments of length $\ell$ of a string of length $n$ being $n-\ell+1$, and by the fact that each fragment of length $\ell'>\ell$ has a prefix of length $\ell$.

\begin{observation}\label{obs:distinct}
    For any string $T$, let $S_T(k)$ be the number of distinct substrings of length $k$. The number $S_T(k')$ of distinct substrings of any length $k'>k$ is at least $S_T(k)-(k'-k)$.
\end{observation}

\begin{lemma}\label{lem:delta'}
    Let $\delta'=\sup\{\frac{S_T(2^d)}{2^d}\mid d=0,\ldots,\log n\}$. Then $\delta \le 2\delta'+1$.
\end{lemma}

\begin{proof}
    Let $\delta=\frac{S_T(k)}{k}$ for some $k\geq 1$, and let $d$ be the integer such that
    \begin{equation}\label{eq:powers}
        2^{d}\leq k < 2^{d+1}.
    \end{equation}
  By the definition of $\delta'$, we have that
    $\delta'\geq \frac{S_T(2^{d+1})}{2^{d+1}}$.
    By applying Observation~\ref{obs:distinct}, we obtain:
    \begin{equation*}\label{eq:proof}
        \frac{S_T(2^{d+1})}{2^{d+1}}\ge \frac{S_T(k)-(2^{d+1}-k)}{2^{d+1}}\ge \frac{S_T(k)-(2^{d+1}-2^{d})}{2^{d+1}}\ge \frac{S_T(k)}{2k}-\frac{2^d}{2^{d+1}}=\frac{1}{2}\delta-\frac{1}{2}.
    \end{equation*}
\end{proof}

Recall that the algorithm underlying
Theorem~\ref{the:n3} works in $\frac{n}{b}$ phases, where each phase handles a range of $b$ lengths $k$. By plugging in Lemma~\ref{lem:delta'}, the number of phases become $\Theta(\log n)$ -- instead of $\Theta(n/b)$ -- and so we obtain a simple
$\cOtilde(n^2/b)$-time and $\cO(b)$-space algorithm to approximate $\delta$, within a constant factor, in the comparison model.

\bibliography{bibliography}

\end{document}